\newcommand{\set}[1]{\{#1\}}
\newcommand{\ep}{\epsilon}
\newcommand{\cost}{\text{cost}}
\newtheorem{lemma}{Lemma}
\newtheorem{theorem}{Theorem}
\newcolumntype{M}[1]{>{\centering\arraybackslash}m{#1}}
\title{A PTAS for Bounded-Capacity Vehicle Routing in Planar Graphs}
\author{Amariah Becker\thanks{Research supported by NSF Grant CCF-1409520}}
\affil{Brown University}
\author{Philip Klein\thanks{Research supported by NSF Grant CCF-1409520}}
\affil{Brown University}
\author{Aaron Schild\thanks{Research supported by NSF Grant CCF-1816861}}
\affil{University of California, Berkeley}
\begin{document}

\maketitle

\begin{abstract}
The {\sc Capacitated Vehicle Routing} problem is to find a
minimum-cost set of tours that collectively cover clients in a graph,
such that each tour starts and ends at a specified depot and is
subject to a capacity bound on the number of clients it can serve.  In
this paper, we present a polynomial-time approximation scheme (PTAS)
for instances in which the input graph is planar and the capacity is
bounded.  Previously, only a quasipolynomial-time approximation scheme
was known for these instances.  To obtain this result, we show how to embed planar graphs
into bounded-treewidth graphs while preserving, in expectation, the
client-to-client distances up to a small additive error proportional
to client distances to the depot.
\end{abstract}

\section{Introduction}\label{sec:intro}
We define the {\sc Capacitated Vehicle Routing} problem with capacity $Q>0$ 
as follows.  The input is an undirected graph with nonnegative
edge-lengths, a distinguished vertex $r$ (the \emph{depot}), and a set
$S$ of vertices (the \emph{clients}).  The output is a set of tours
(closed walks), each including the depot, together with an assignment
of clients to tours such that each client belongs to the tour to which
it is assigned, and such that each tour is assigned at most $Q$
clients.  The objective is to minimize the total length of the tours.
We refer to this quantity as the \emph{cost} of the solution.

This problem
arises in both public and commercial settings including
planning school bus routes and package delivery.  {\sc Capacitated
  Vehicle Routing} is NP-hard for any capacity greater than
two~\cite{asano1997}. We provide a polynomial-time approximation
scheme (PTAS) for {\sc Capacitated Vehicle Routing} when the capacity
is bounded and the underlying graph is planar

An \emph{embedding} of a guest graph $G$ in a host graph $H$ is a mapping $\phi: V(G) \longrightarrow V(H)$.  One seeks embeddings in which, for each pair $u,v$ of vertices of $G$, the $u$-to-$v$ distance in $G$ is in some sense approximated by the $\phi(u)$-to-$\phi(v)$ distance in $H$.  One algorithmic strategy for addressing a metric problem is as follows: find an embedding $\phi$ from the input graph $G$ to a graph $H$ with simple structure; find a good solution in $H$; lift the solution to a solution in $G$.  The success of this strategy depends on how easy it is to find a good solution in $H$ and how well distances in $H$ approximate corresponding distances in $G$.
 
In this paper, we give a randomized method for embedding a planar
graph $G$ into a bounded-treewidth host graph $H$ so as to achieve a
certain expected distance approximation guarantee.  There is a
polynomial-time algorithm to find an optimal solution to {\sc
  Capacitated Vehicle Routing} in bounded-treewidth graphs.  This
algorithm is used to find an optimal solution to the problem induced
in $H$.  This solution in the host graph is then lifted to obtain a near-optimal solution in $G$.

\subsection{Related Work}\label{sec:related_work}

\subsubsection*{Capacitated Vehicle Routing} There is a substantial body of work on approximation algorithms for {\sc Capacitated Vehicle Routing}.  As the problem generalizes the {\sc Traveling Salesman Problem} (TSP), for general metrics and values of $Q$, {\sc Capacitated Vehicle Routing} is also APX-hard~\cite{papadimitriou}.  Haimovich and Rinnoy Kan~\cite{haimovich1985} observe the following lower bound.
\begin{equation} \label{eq:lb}
\frac{2}{Q}\sum_{v\in S}{d(v,r)} \leq \cost(OPT)
\end{equation}
where $\cost(OPT)$ denotes the cost of the optimal solution.  They
use this inequality to give a $1+(1-\frac{1}{Q})\alpha$-approximation, where $\alpha$ denotes the approximation ratio of TSP.  Using Christofides 1.5-approximation for TSP~\cite{christofides}, this gives a $2.5-\frac{1}{Q}$ approximation ratio.   For general metrics and values of $Q$ this result has not been substantially improved upon.  Even for tree metrics, the best known approximation ratio for arbitrary values of $Q$ is 4/3, due to Becker~\cite{becker_trees}. While no polynomial-time approximation schemes are known for arbitrary $Q$ for \emph{any} nontrivial metric, recently Becker and Paul~\cite{becker_paul} gave a bicriteria $(1,1+\ep)$ approximation scheme for tree metrics.  It returns a solution of at most the optimal cost, but in which each tour is responsible for at most $(1+\ep)Q$ clients.

One reasonable relaxation is to consider restricted values of $Q$. Even for $Q$ as small as 3, {\sc Capacitated Vehicle Routing} is APX-hard in general metrics~\cite{asano1997}.  On the other hand, for fixed values of $Q$, the problem can be solved in polynomial time on trees and bounded-treewidth graphs.

Much attention has been given to approximation schemes for Euclidean metrics. In the Euclidean plane $\mathbb{R}^2$, PTASs are known for instances in which the value of $Q$ is constant~\cite{haimovich1985}, $O(\log n/\log\log n)$~\cite{asano1997}, and $\Omega(n)$~\cite{asano1997}.  For $\mathbb{R}^3$, a PTAS is known for $Q=O(\log n)$ and for higher dimensions $\mathbb{R}^d$, a PTAS is known for $Q=O(\log^{1/d}n)$\cite{khachay2016}.  For arbitrary values of $Q$, Mathieu and Das designed a quasi-polynomial time approximation scheme (QPTAS) for instances in $\mathbb{R}^2$~\cite{das2010}. No PTAS is known for arbitrary values of $Q$.

There have been a few recent advances in designing approximation
schemes for {\sc Capacitated Vehicle Routing} in non-Euclidean
metrics.  Becker, Klein, and Saulpic \cite{bks_planar} gave a
QPTAS for bounded-capacity instances in planar and bounded-genus graphs.  The same authors gave a PTAS for graphs of bounded highway dimension \cite{bks_hwy_dim}.

\subsubsection*{Metric embeddings} 

There has been much work on metric embeddings. In particular,
Bartal~\cite{Bartal96} gave a randomized algorithm for selecting an
embedding $\phi$ of the input graph into a tree so that, for any
vertices $u$ and $v$ of $G$, the expected $\phi(u)$-to-$\phi(v)$
distance in the tree approximates the $u$-to-$v$ distance in $G$ to
within a polylogarithmic factor.  Fakcharoenphol, Rao, and
Talwar~\cite{FRT04} improved the factor to $O(\log n)$.

Talwar~\cite{talwar2004bypassing} gave a randomized algorithm for
selecting an embedding of a metric space of bounded doubling dimension
and aspect ratio $\Delta$ into a graph whose treewidth is bounded by a
function that is polylogarithmic in $\Delta$; the distances are
approximated to within a factor of $1+\epsilon$.  Feldman, Fung,
K\"onemann, and Post.~\cite{feldmann20151+} built on this result to
obtain a similar embedding theorem for graphs of bounded highway
dimension.

What about planar graphs?  Chakrabarti et al.~\cite{CJLV08} showed a
result that implies that unit-weight planar graphs cannot be embedded into
distributions over $o(\sqrt{n})$-treewidth graphs so as to achieve
approximation to within an $o(\log n)$ factor.

Let us consider distance approximation guarantees with absolute
(rather than relative) error.  Becker, Klein, and
Saulpic~\cite{bks_hwy_dim} gave a deterministic algorithm that, given
a constant $\epsilon>0$, finds an embedding from a graph $G$ of bounded
highway dimension to a bounded-treewith graph $H$ such that, for each
pair $u,v$ of vertices of $G$, the $\phi(u)$-to-$\phi(v)$ distance in
$H$ is at least the $u$-to-$v$ distance in $G$ and exceeds that
distance by at most $\epsilon$ times the $u$-to-$r$ distance
plus the $v$-to-$r$ distance, where $r$ is a given vertex of $G$.
This embedding was used to obtain the previously mentioned PTAS for 
 {\sc Capacitated Vehicle Routing} with bounded capacity on graphs of
 bounded highway dimension.

Recently, Fox-Epstein, Klein, and Schild~\cite{spanner_paper} showed how to embed planar graphs into graphs of bounded treewidth, such that distances are preserved up to a small additive error of $\ep D$, where $D$ is the diameter of the graph.  They show how such an embedding can be used to achieve efficient bicriteria approximation schemes for $k$-{\sc Center} and $d$-{\sc Independent Set}.

\subsection{Main Contributions}\label{sec:contributions}
In this paper we present the first known PTAS for {\sc Capacitated Vehicle Routing} on planar graphs.  We formally state the result as follows.

\begin{theorem}\label{thm:vehicle_routing}
For any $\ep>0$ and capacity $Q$, there is a polynomial-time algorithm
that, given an instance of {\sc Capacitated Vehicle Routing} on planar
graphs with capacity $Q$, returns a solution whose cost is at most $1+\ep$ times optimal.
\end{theorem}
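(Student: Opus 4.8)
The plan is to follow the embed--solve--lift strategy outlined in Section~\ref{sec:intro}, with the randomized planar-to-bounded-treewidth embedding as the engine. Concretely, I would first establish an embedding $\phi$ of the input planar graph $G$ into a host graph $H$ whose treewidth is bounded by a function of $\ep$ and $Q$, with two properties: (i) $\phi$ is \emph{non-contracting}, meaning every $\phi(u)$-to-$\phi(v)$ path in $H$ can be realized by a $u$-to-$v$ walk in $G$ of no greater length, so that any solution found in $H$ lifts to a $G$-solution of at most the same cost; and (ii) for every pair $u,v$ the expected distortion satisfies $\mathbb{E}[d_H(\phi(u),\phi(v))] \le d_G(u,v) + \ep\,(d_G(u,r) + d_G(v,r))$, i.e.\ the additive error is charged to the endpoints' distances to the depot $r$. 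Granting such an embedding, I would run the known polynomial-time dynamic program for {\sc Capacitated Vehicle Routing} on bounded-treewidth graphs to compute an optimal solution $\mathrm{ALG}_H$ to the induced instance on $H$ (clients $\phi(S)$, depot $\phi(r)$, capacity $Q$), and then lift it to a feasible $G$-solution $\mathrm{ALG}$ with the same client-to-tour assignment.

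The analysis then has two ingredients. By property (i), $\cost_G(\mathrm{ALG}) \le \cost_H(\mathrm{ALG}_H)$, and by optimality of $\mathrm{ALG}_H$ on $H$ we have $\cost_H(\mathrm{ALG}_H) \le \cost_H(\OPT^\phi)$, where $\OPT^\phi$ is the image under $\phi$ of an optimal $G$-solution $\OPT$. It remains to bound $\mathbb{E}[\cost_H(\OPT^\phi)]$. Writing the $H$-cost of each mapped tour as the sum over its consecutive hops $(a,b)$ of $d_H(\phi(a),\phi(b))$ and applying property (ii) hop by hop, the expected $H$-cost of a tour exceeds its $G$-cost by at most $\ep$ times $\sum_{(a,b)}(d_G(a,r)+d_G(b,r))$. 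Since each client appears as an endpoint of at most two hops in its tour and the depot contributes $d_G(r,r)=0$, summing over all tours (each client lying in exactly one) yields
\begin{equation*}
\mathbb{E}[\cost_H(\OPT^\phi)] \;\le\; \cost_G(\OPT) + 2\ep \sum_{v\in S} d_G(v,r).
\end{equation*}

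The crucial step is to convert this depot-proportional additive error into a multiplicative one, which is exactly what the Haimovich--Rinnoy Kan lower bound~\eqref{eq:lb} supplies: $\sum_{v\in S} d_G(v,r) \le \tfrac{Q}{2}\cost_G(\OPT)$. Substituting gives $\mathbb{E}[\cost_G(\mathrm{ALG})] \le (1+\ep Q)\,\cost_G(\OPT)$, so running the procedure with the embedding parameter set to $\ep/Q$---legitimate because $Q$ is bounded, and because the treewidth, hence the running time, depends only on $\ep$ and $Q$---produces a solution of expected cost at most $(1+\ep)\cost_G(\OPT)$. Since the $G$-cost of the lifted solution can be computed directly, I would repeat the randomized construction a constant number of times and return the cheapest solution, turning the expectation guarantee into a high-probability one (alternatively, one could derandomize the embedding).

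I expect the main obstacle to be constructing the embedding itself, i.e.\ achieving property (ii) together with bounded treewidth. The natural approach is a scale decomposition by distance to the depot: partition $V(G)$ into annuli $\{v : d_G(v,r) \in [(1+\ep)^{i}, (1+\ep)^{i+1})\}$, randomly shift the annulus boundaries, and within each bounded window of consecutive annuli apply a diameter-additive-error planar embedding in the style of Fox-Epstein, Klein, and Schild~\cite{spanner_paper}, whose additive error $\ep \cdot(\text{local diameter})$ is automatically proportional to the depot distances of the vertices in that window. The delicate points are (a) gluing the per-scale bounded-treewidth hosts along shared depot-side portals so that their union still has bounded treewidth, and (b) controlling the error of a path that crosses many scales toward the depot: the per-scale errors form a geometric series dominated by the outermost scale, so they telescope to $O(\ep\, d_G(u,r))$ rather than accumulating, which is what makes guarantee (ii) attainable. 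This mirrors the scale-based technique behind the bounded-highway-dimension embedding of~\cite{bks_hwy_dim}, now carried out for planar graphs.
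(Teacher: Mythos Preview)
Your proposal is correct and follows essentially the same approach as the paper: embed $G$ into a bounded-treewidth host via the depot-proportional-error embedding, solve optimally there by dynamic programming, lift back using non-contraction, and convert the additive error to a multiplicative $(1+\ep)$ via the Haimovich--Rinnooy~Kan bound~\eqref{eq:lb} after scaling the embedding parameter by $1/Q$. The only notable differences are in details you flag as speculative: the paper uses much wider bands (ratio $(1/\ep)^{1/\ep}$ rather than $(1+\ep)$) and glues them simply by attaching every vertex to a fresh root $r'$ rather than via portals, and it derandomizes by enumerating the at most $n$ distinct band partitions rather than by repetition.
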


Prior to this work, only a QPTAS was known~\cite{bks_planar} for planar graphs.  As described in Section~\ref{sec:related_work}, PTASs for {\sc Capacitated Vehicle Routing} are known only for very few metrics.  Our result expands this small list to include planar graphs---a graph class that is quite relevant to vehicle-routing problems as many road networks are planar or near-planar.

The basis for our new PTAS is a new metric-embedding theorem.  For a graph
$G$ with edge-lengths and vertices $u$ and $v$, let $d_G(u,v)$ denote the $u$-to-$v$
distance in $G$.

\begin{restatable}{theorem}{embeddingthm}\label{thm:embed}
There is a constant $c$ and a randomized polynomial-time algorithm
that, given a planar graph $G$ with specified root vertex $r$ and
given $0<\ep<1$, computes a graph $H$ with treewidth at most $(\frac{1}{\ep})^{c\ep^{-1}}$ and
an embedding $\phi$ of $G$ into $H$, such that, for every pair of
vertices $u,v$ of $G$, $d_G(u,v) \leq d_H(\phi(u),\phi(v))$ with probability 1, and 
\begin{equation} \label{eq:expected-error}
E[d_H(\phi(u),\phi(v))] \leq d_G(u,v) + \ep[d_G(u,r) +d_G(v,r)]
\end{equation}
\end{restatable}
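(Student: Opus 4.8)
The plan is to reduce the root-proportional guarantee to the diameter-proportional embedding of Fox-Epstein, Klein, and Schild~\cite{spanner_paper}, applied separately at each radial scale around $r$. That result embeds a planar graph of diameter $D$ into a bounded-treewidth graph with additive error $\ep D$, and I will also use that it never contracts distances. The guiding observation is that the error we are allowed for a pair $u,v$, namely $\ep[d_G(u,r)+d_G(v,r)]$, is proportional to $R:=\max\{d_G(u,r),d_G(v,r)\}$; so if each such pair is served by a sub-embedding whose effective diameter is $O(R)$, then the incurred additive error $\ep\cdot O(R)$ is within budget after rescaling $\ep$ by a constant.

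First I would fix scales $\rho_j=2^{j}$ and, for each $j$, form a guest graph $G_j$ consisting of the ball $B(r,3\rho_j)$ together with a replacement of its deep interior: rather than keeping, or contracting, the part inside $B(r,\rho_{j-w})$ for $w=\Theta(\log(1/\ep))$, I would place an $\ep$-net of portals on the sphere of radius $\rho_{j-w}$ and join each portal to $r$ by an edge whose length equals its true distance to $r$. This keeps the diameter of $G_j$ at $O(\rho_j)$ while introducing no shortcut: any route crossing the interior now costs at least the through-root distance, which is at least the true distance. Applying~\cite{spanner_paper} to each $G_j$ yields $H_j$ of treewidth $t(\ep)$ with $d_{G}(x,y)\le d_{H_j}(\phi_j(x),\phi_j(y))$ and expected excess $O(\ep\rho_j)$ for pairs whose shortest path stays in the annulus. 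I would then glue the $H_j$ into a single $H$ by adding, for each vertex and each pair of consecutive scales in which it survives, a zero-length edge between its two copies, and set $\phi(v):=\phi_{j(v)}(v)$ for the native scale $j(v)$ with $\rho_{j(v)-1}\le d_G(v,r)<\rho_{j(v)}$. Because the interior replacement makes each vertex survive in only $O(w)$ consecutive scales, a tree decomposition of $H$ can be assembled by stacking the $O(w)$ active decompositions, giving treewidth $O(w\,t(\ep))$, of the claimed form $(1/\ep)^{c\ep^{-1}}$ for suitable $c$.

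For correctness I would argue two things. Non-contraction holds with probability $1$: every edge of $H$ either lies in some $H_j$ (where~\cite{spanner_paper} never contracts $G_j$-distances, and $G_j$-distances dominate $G$-distances since the interior replacement only adds detours) or is a zero-length edge between copies of one vertex, so every $H$-walk projects to a $G$-walk of no greater length. For the expected upper bound I would exhibit, for each pair $u,v$, a single good route: if their native scales differ by at most $w$, route entirely inside the common graph $H_{j^*}$ with $j^*=\max\{j(u),j(v)\}$, where both vertices survive in the annulus and the expected cost is $d_G(u,v)+O(\ep\rho_{j^*})=d_G(u,v)+O(\ep R)$; if the native scales differ by more than $w$, route through the root, whose excess over $d_G(u,v)$ is at most $2\min\{d_G(u,r),d_G(v,r)\}\le 2^{-w+O(1)}R\le \ep R$ by the choice of $w$. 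Taking expectations and rescaling $\ep$ gives~(\ref{eq:expected-error}); the randomness, and hence the expectation, is inherited from the base embedding, while non-contraction is deterministic.

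The main obstacle is the tension between bounding the treewidth and respecting true distances. Bounded treewidth forces each vertex to participate in only a few scales, which tempts one to delete or contract the interior ball $B(r,\rho_{j-w})$; but deletion destroys the through-root routes that near-collinear pairs need, while contraction creates shortcuts that would make $d_H<d_G$ and destroy the sure non-contraction guarantee. Threading this needle---replacing the interior by an honest, portal-sparsified spider to $r$, and verifying both that it over-estimates the relevant distances by only $O(\ep R)$ and that it never under-estimates any distance---is where the real work lies. A secondary subtlety is checking that the pairs whose true shortest path dips below the annulus are exactly the pairs for which through-root routing is already near-optimal, so that truncating the annulus costs nothing we cannot afford.
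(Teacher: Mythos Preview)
Your high-level strategy of handling each radial scale with the diameter-based embedding of~\cite{spanner_paper} matches the paper's, but the constructions diverge. The paper partitions $V(G)$ into \emph{disjoint} bands using a random radial shift $x\in[0,1]$, embeds each band's subgraph independently via Lemma~\ref{lem:spanner}, and joins the resulting host graphs $H_i$ only through a single fresh apex $r'$ with edges of length $d_G(v,r)$. Because the $H_i$ are otherwise disconnected, the treewidth of $H$ is immediately $\max_i \mathrm{tw}(H_i)+1$. The price is that a pair $u,v$ landing in different bands must route through $r'$ at cost $d_G(u,r)+d_G(v,r)$; the paper absorbs this in expectation by showing (Lemma~\ref{lem:prob}) that the random shift places any fixed pair with comparable radii into the same band with probability at least $1-\ep$. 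All the randomness in Theorem~\ref{thm:embed} comes from this shift; Lemma~\ref{lem:spanner} itself is deterministic, so your remark that ``the randomness is inherited from the base embedding'' is off.

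Your proposal instead uses deterministic overlapping scales and glues the $H_j$ with zero-length edges between copies of the same vertex. The distance analysis is plausible, but the treewidth argument has a genuine gap. You cannot ``stack the $O(w)$ active decompositions'' to obtain width $O(w\,t(\ep))$ unless the tree decompositions of the different $H_j$ are aligned over a common index tree, and black-box use of Lemma~\ref{lem:spanner} gives no such alignment. In general, gluing two treewidth-$1$ host graphs by the identity matching between copies can yield treewidth $\Theta(n)$: take two paths on $\{1,\ldots,n\}$ in orders chosen so that their union is an expander; contracting the matching edges recovers that union, and contraction never raises treewidth. Restricting each vertex to $O(w)$ consecutive scales does not by itself rule this out. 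The paper's disjoint-bands-plus-apex construction is exactly what makes the treewidth bound trivial; to make your overlapping-scales approach go through you would have to open up~\cite{spanner_paper} and coordinate its decompositions across scales, which is a different and substantially harder argument than the one you sketch.
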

\noindent The expectation $E[\cdot]$ is over the random choices of the algorithm.

Why does this metric-embedding result give rise to an approximation
scheme for {\sc Capacitated Vehicle Routing}?
We draw on the following observation, which was also used in previous
approximation schemes~\cite{bks_planar,bks_hwy_dim}:
 tours with clients far from the depot can accommodate a larger error.  In particular, each client can be \emph{charged} error that is proportional to its distance to the depot.  In designing an appropriate embedding, we can afford a larger \emph{error allowance} for the clients farther from the depot.

Our new embedding result builds on that of Fox-Epstein et al.~\cite{spanner_paper}.  The challenge in directly applying their embedding result is that it gives an \emph{additive} error bound, proportional to the diameter of the graph.  This error is too large for those clients close to the depot.  Instead, we divide the graph into annuli (\emph{bands}) defined by distance ranges from the depot and apply the embedding result to each induced subgraph independently, with an increasingly large error tolerance for the annuli farthest from the depot.  In this way, each client \emph{can} afford an error proportional to the diameter of the \emph{subgraph} it belongs to.

How can these subgraph embeddings be combined into a global embedding with the desired properties?  In particular, clients that are close to each other in the input graph may be separated into different annuli.  How can we ensure that the embedding approximately preserves these distances while still achieving bounded treewidth?

We show that by randomizing the choice of where to define the annuli boundaries, and connecting all vertices of all subgraph embeddings to a new, global depot,  client distances are approximately preserved (to within their error allowance) \emph{in expectation} by the overall embedding, without substantially increasing the treewidth.  Specifically, we ensure that the annuli are \emph{wide} enough that the probability of nearby clients being separated (and thus generating large error) is small.  Simultaneously, the annuli must be \emph{narrow} enough that, within a given annulus, the clients closest to the depot can afford an error proportional to error allowance of the clients farthest from the depot.

Once the input graph $G$ is embedded in a bounded-treewidth host graph
$H$, a dynamic-programming algorithm can be used to find an optimal
solution to the instance of {\sc Capacitated Vehicle Routing} induced
in $H$, and the solution can be straightforwardly lifted to obtain a
solution in the input graph that in expectation is near-optimal.

Finally we describe how this result can be derandomized by trying all possible (relevant) choices for defining annuli and noting that for \emph{some} such choice, the resulting solution cost must be near-optimal.

\subsection{Outline}\label{sec:outline}
In Section~\ref{sec:prelims} we describe preliminary notation and definitions.  Section~\ref{sec:embedding} describes the details of the embedding and provides an analysis of the desired properties.  In Section~\ref{sec:ptas} we outline our algorithm and prove Theorem~\ref{thm:vehicle_routing}. We conclude with some remarks in Section~\ref{sec:conclusion}.

\section{Preliminaries}\label{sec:prelims}
\subsection{Basics}

Let $G=(V,E)$ denote a graph with vertex set $V$ and edge set $E$.
The graph comes equipped with a 

, and
let $n = |V|$.  As mentioned earlier, for any two vertices $u,v \in
V$, we use $d_G(u,v)$ to denote the $u$-to-$v$ distance in $G$,
i.e. the minimum length of a $u$-to-$v$ path.
We might omit the subscript when the choice of graph is unambiguous.  The \emph{diameter} of a graph $G$ is the maximum distance $d_G(u,v)$ over all choices of $u$ and $v$.
A graph is \emph{planar} if it can be drawn in the plane without any edge crossings.

We use $OPT$ to denote an optimal solution. For a minimization problem, an \emph{$\alpha$-approximation algorithm} is one that returns a solution whose cost is at most $\alpha$ times the cost of $OPT$.  An \emph{approximation scheme} is a family of $(1+\ep)$-approximation algorithms, indexed by $\ep >0$.  A \emph{polynomial-time approximation scheme} (PTAS) is an approximation scheme such that, for each $\ep >0$, the corresponding algorithm runs in $O(n^c)$ time, where $c$ is a constant independent of $n$ but may depend on $\ep$.  A \emph{quasi-polynomial-time approximation scheme} (QPTAS) is an approximation scheme such that, for each $\ep >0$, the corresponding algorithm runs in $O(n^{\log^cn})$ time, where $c$ is a constant independent of $n$ but may depend on $\ep$.  

An \emph{embedding} of a guest graph $G$ into a host graph $H$ is a mapping $\phi:V_G \rightarrow V_H$ of the vertices of $G$ to the vertices of $H$.  

A \emph{tree decomposition} of a graph $G$ is a tree $T$ whose nodes (called \emph{bags}) correspond to subsets of $V$ with the following properties:
\begin{enumerate}
\item For each $v\in V$, $v$ appears in some bag in $T$
\item For each $(u,v) \in E$, $u$ and $v$ appear \emph{together} in some bag in $T$
\item For each $v \in V$, the subtree induced by the bags of $T$ containing $v$ is connected
\end{enumerate}

The \emph{width} of a tree decomposition is the size of the largest bag, and the \emph{treewidth} of a graph $G$ is the minimum width over all tree decompositions of $G$.

\subsection{Problem Statement}

A \emph{tour} in a graph $G$ is a closed path $v_0,v_1,v_2,...,v_L$ such that $v_0 = v_L$ and for all $i \in \{1,2,...,L\}$, $(v_{i-1},v_i)$ is an edge in $G$.  

Given a capacity $Q>0$ and a graph $G = (V,E)$ with specified client set $S\subseteq V$ and depot vertex $r\in V$, the {\sc Capacitated Vehicle Routing} problem is to find a set of tours $\Pi = \{\pi_1,\pi_2,...\pi_{|\Pi|}\}$ that collectively cover all clients and such that each tour includes $r$ and covers at most $Q$ clients.  The cost of a solution is the sum of the tour lengths, and the objective is to minimize this sum.

If a client $s$ is covered by a tour $\pi$, we say that $\pi$ \emph{visits} $s$.  Note that $\pi$ may \emph{pass} many other vertices (including other clients) that it does not cover.

As stated, the problem assumes that each client has unit demand.  In fact, the more general case, where clients have integral demand (assumed to be polynomially bounded) that is allowed to be covered across multiple tours (demand is \emph{divisible}) reduces to the unit-demand case as follows:  For each client $s\in S$ with demand $dem(s) = k$, add $k$ new vertices $\{v_1,v_2,...,v_k\}$ each with unit demand and edges $(s,v_i)$ of length zero, and set $dem(s)$ to zero.  Note that this modification does not affect planarity.  Additionally, since demand is assumed to be polynomially-bounded, the increase in graph size is negligible for the purpose of a PTAS.

For {\sc Capacitated Vehicle Routing} with \emph{indivisible} demands, each client's demand must be covered by a single tour, and a tour can cover at most $Q$ units of client demand.  



We assume values of $\ep$ are less than one.  If not, any $\ep \geq 1$ can be replaced with a number $\ep'$ slightly less than one.  This only helps the approximation guarantee and does not significantly increase runtime.  Of course for very large values of $\ep$, an efficient constant-factor approximation can be used instead (see Section~\ref{sec:related_work}).

\section{Embedding}\label{sec:embedding}

In this section, we prove Theorem~\ref{thm:embed}, which we restate
for convenience:

\addtocounter{theorem}{-1}
\embeddingthm
%
%
The proof uses as a black box the following result from~\cite{spanner_paper}:

\begin{lemma}[\cite{spanner_paper}]\label{lem:spanner}
  There is a number $c$ and a polynomial-time algorithm that, given
  a planar graph $G$ with specified root vertex $r$ and diameter $D$,
  computes a graph $H$ of treewidth at most $(\frac{1}{\ep})^c$ 
  and an embedding $\phi$ of $G$ into $H$ such that, for all vertices $u$ and $v$,
  $$d_G(u,v)\leq d_H(\phi(u),\phi(v)) \leq d_G(u,v) + \ep D$$
\end{lemma}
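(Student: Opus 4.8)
The plan is to construct the host graph $H$ from a recursive shortest-path separator decomposition of $G$, using evenly spaced \emph{portals} along the separators to keep the treewidth bounded while controlling the distortion. First I would root a shortest-path tree at $r$ and apply the planar shortest-path separator theorem in the style of Lipton--Tarjan and Thorup: any planar graph has a balanced separator consisting of a constant number of root-to-leaf tree paths, each of length at most the diameter $D$. A key preliminary step is to organize this recursion so that \emph{every} region it produces is bounded by only $O(1)$ shortest paths; this bounded-boundary property is standard in planar distance-oracle constructions, and I would establish (or cite) it before anything else, since it is what ultimately caps the treewidth.

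Next I would place portals along each separator path at intervals of $\ep' D$ for a suitable $\ep' = \Theta(\ep)$. Because each path has length at most $D$, this yields $O(1/\ep)$ portals per path and hence $O(1/\ep)$ portals on the boundary of any region. The host $H$ is then read off the recursion tree: each region contributes a bag holding its boundary portals, portals sharing a bag are joined by edges weighted by their true $G$-distances, and each original vertex $v$ is attached, in its own leaf bag, by edges of length $d_G(v,p)$ to the $O(1/\ep)$ portals of the smallest region containing it. Taking $\phi$ to be the identity inclusion, the recursion tree with these bags is a tree decomposition of $H$ of width $O(1/\ep)$, which is within the claimed $(1/\ep)^c$ bound.

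For the distance guarantee, the lower bound $d_G(u,v) \le d_H(\phi(u),\phi(v))$ should be immediate from the construction: every edge of $H$ is witnessed by a $G$-path of equal length, so no $H$-path can be shorter than the corresponding $G$-distance. The upper bound is the real content. Given a shortest $u$--$v$ path in $G$, I would follow it and, at each point where it crosses a separator, reroute it to the nearest portal on that separator; each such reroute costs at most one portal spacing, $\Theta(\ep D)$, and the portal-to-portal edges inside the bags let the rerouted path travel between portals at no additional distortion.

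The main obstacle is precisely this rerouting accounting. A $u$--$v$ geodesic can cross separators at all $\Theta(\log n)$ levels of the recursion, so charging $\Theta(\ep D)$ per crossing gives a useless bound of $\Theta(\ep D \log n)$; yet shrinking the spacing to $\ep D/\log n$ to compensate would force $\Theta((1/\ep)\log n)$ portals per region and destroy the constant treewidth. Reconciling these is the heart of the argument: I would try to show that the pair $u,v$ needs to be certified \emph{only once}, at the coarsest level that separates them, by routing $\phi(u)$ to a boundary portal of the separating region, then through the portal graph to a portal adjacent to $\phi(v)$, so that a single detour of total cost $O(\ep D)$ suffices rather than one detour per level. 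Making this single-detour route available in $H$ without attaching every vertex to the portals of all of its ancestor regions --- which would reintroduce the $\log n$ blowup --- is the delicate point; a random offset for the portal spacing, so that some portal is close to each crossing in expectation, is the natural device, and the deterministic statement would then be recovered by derandomizing over the $O(1/\ep)$ distinct offsets.
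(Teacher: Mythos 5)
Your proposal founders on exactly the point you flag as ``the delicate point,'' and the patch you suggest does not close it. In your host graph each vertex is attached only to the $O(1/\ep)$ portals of the smallest region containing it, so for a pair $u,v$ first separated at some ancestor region $R$, the ``single detour'' route needs an $H$-path from $\phi(u)$ to a portal $p$ near the crossing point on $R$'s separator of length roughly $d_G(u,p)$ --- and no such path exists in $H$ unless $p$ already lies in $u$'s leaf bag. Reaching $p$ from $u$ is itself a recursive instance of the same problem: the route must snap to a portal at every level between $u$'s leaf region and $R$, paying $\Theta(\ep D)$ per level, so the error telescopes back to $\Theta(\ep D\log n)$. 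The only direct repair --- attaching every vertex to the portals of all $\Theta(\log n)$ of its ancestor regions --- forces bags of size $\Theta((1/\ep)\log n)$ and hence treewidth $\Theta((1/\ep)\log n)$, which is the known QPTAS-grade embedding (essentially the construction behind the earlier planar QPTAS of Becker, Klein, and Saulpic), not this lemma. Nor does the random offset rescue the accounting: the expected distance from a fixed crossing point to the nearest randomly shifted portal is still a constant fraction of the spacing, so by linearity of expectation the per-level costs again sum to $\Theta(\ep D\log n)$, and derandomizing over $O(1/\ep)$ offsets cannot beat that expectation.

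For the record, the present paper does not prove this lemma at all --- it imports it as a black box from the cited work of Fox-Epstein, Klein, and Schild --- and that proof is structurally different from yours precisely because the recursive separator-plus-portals route hits the $\log n$ barrier you describe. Their construction is global rather than recursive: they cover the graph by a bounded (in terms of $1/\ep$) collection of shortest paths carrying portals, assign every vertex to a cluster around a portal within additive distance $O(\ep D)$, and contract the clusters; the contracted graph is a planar minor with hop-diameter polynomial in $1/\ep$, and since a planar graph of unweighted diameter $d$ has treewidth $O(d)$, a tree decomposition of width $\mathrm{poly}(1/\ep)$ can be lifted to the host graph. Error is then incurred only $O(1)$ times per vertex pair (when snapping $u$ and $v$ to portals), never once per level --- which is how the $\log n$ factor is avoided, at the price of treewidth $(1/\ep)^c$ for a constant $c$ rather than the $O(1/\ep)$ your sketch optimistically claims. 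If you want to salvage your approach, the argument you would need to supply is a substitute for per-level certification, and that is not a refinement of the separator scheme but a different proof.
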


For notational convenience, instead of
Inequality~\ref{eq:expected-error} of Theorem~\ref{thm:embed}, we prove
\begin{equation} \label{eq:expected-error3}
E[d_H(\phi(u),\phi(v))] \leq d_G(u,v) + 3\ep[d_G(u,r) +
d_G(v,r)]
\end{equation}
from which Theorem~\ref{thm:embed} can be proved by taking $\epsilon'=\epsilon/3$.

Our embedding partitions vertices of $G$ into \emph{bands} of vertices
defined by distances from $r$.  Choose $x \in [0,1]$ uniformly at
random.  Let $B_0$ be the set of vertices $v$ such that
$d_G(r,v) < {\frac{1}{\ep}}^{(x)\frac{1}{\ep}}$, and for
$i \in \{1,2,3,...\}$ let $B_{i}$ be the set of vertices $v$ such that
${\frac{1}{\ep}}^{(i+x-1)\frac{1}{\ep}} \leq d_G(r,v) <
{\frac{1}{\ep}}^{(i+x)\frac{1}{\ep}}$ (see Figure~\ref{fig:bands}).
Let $G_i$ be the subgraph induced by $B_{i}$, together with all
$u$-to-$v$ and $v$-to-$r$ shortest paths for all $u,v\in B_{i}$.  Note
that although the $B_i$ partition $V$, the $G_i$ do not partition
$G$. Note also that the diameter of $G_i$ is at most
$4{\frac{1}{\ep}}^{(i+x)\frac{1}{\ep}}$.  This takes into account the
paths included in $G_i$ that pass through vertices not in $B_i$.

For each $G_i$, let $\phi_i$ be the embedding and let $H_i$ be
the host graph resulting from applying Lemma~\ref{lem:spanner} and
using $\ep' = \ep^{\frac{1}{\ep}+1}$.  Finally, let $H$ be the graph
resulting from adding a new vertex $r'$ and for all $i$ and all
$v \in B_i$ adding an edge $(\phi_i(v),r')$ of length $d_G(v,r)$. Set
$\phi(v)=\phi_i(v)$ for all $v \in B_i-\set{r}$ and set $\phi(r)=r'$. See
Figure~\ref{fig:embedding}.

\begin{figure}[H]     
\centering
\includegraphics[width=0.5\textwidth]{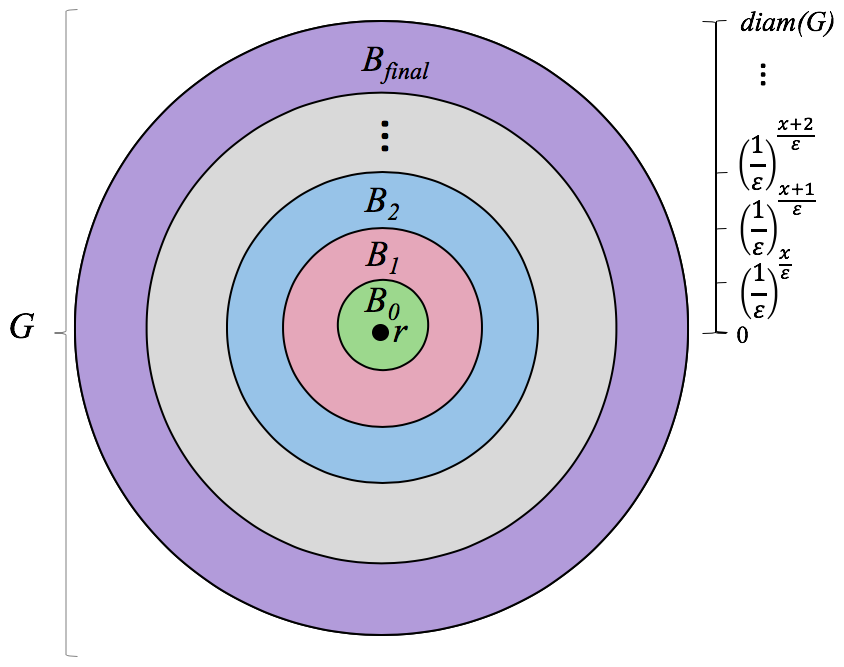}         
\caption{$G$ is divided into bands $B_0,B_1,...,B_{final}$ based on distance from $r$.}         
\label{fig:bands}
\end{figure}     

\begin{figure}[H]     
\centering
\includegraphics[width=0.8\textwidth]{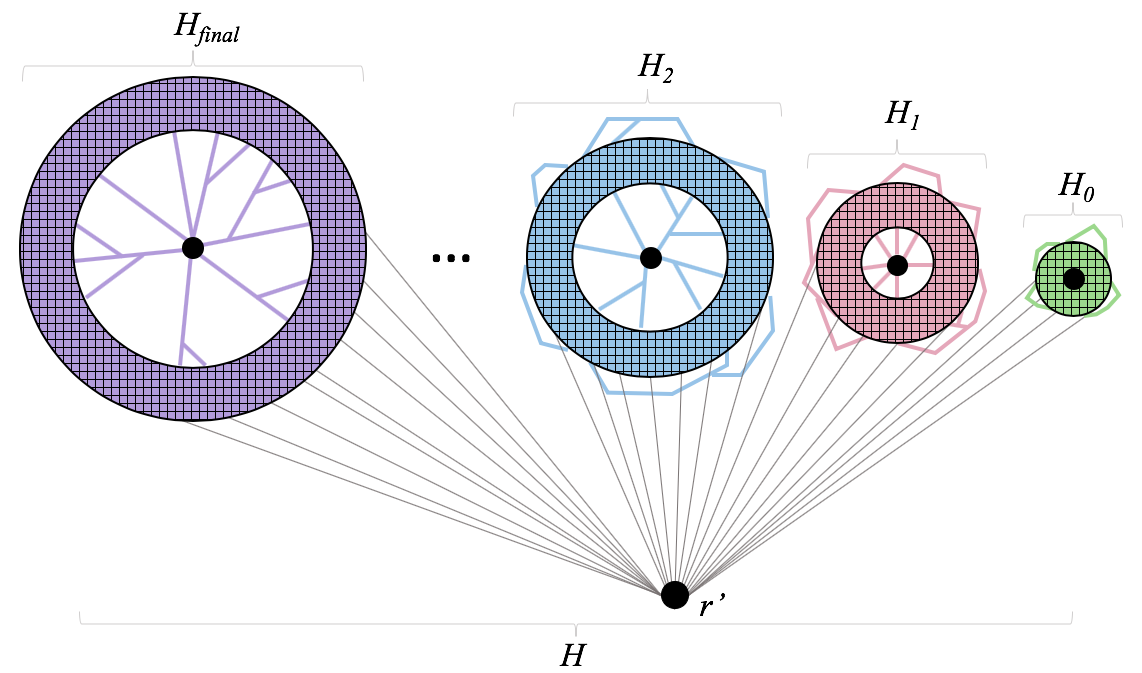}         
\caption{Each subgraph $G_i$ of $G$ is embedded into a host graph $H_i$.
  These graphs are joined via edges to a new depot $r'$ to form a
   host graph for $G$.}         
\label{fig:embedding}
\end{figure}

 Let $H^-$ be the graph obtained from $H$ by deleting $r'$.  The connected components of $H^-$ are $\{H_i\}_i$.  By Lemma~\ref{lem:spanner}, the treewidth of each host graph $H_i$ is at most $(\frac{1}{\ep'})^{c_0}$ = $(\frac{1}{\ep})^{c_0(\ep^{-1} + 1)}$ for some constant $c_0$. This also bounds the treewidth of $H^-$.  Adding a single vertex to a graph increases the treewidth by at most one, so after adding $r'$ back, the treewidth of $H$ is $(\frac{1}{\ep})^{c_0(\ep^{-1} + 1)}+1 = (\frac{1}{\ep})^{c_1\ep^{-1}}$ for some constant $c_1$.

As for the metric approximation, it is clear that $d_G(u,v) \leq
d_H(\phi(u),\phi(v))$ with probability 1.  
We use the following lemma to prove Equation~\ref{eq:expected-error3}.

\begin{lemma}\label{lem:prob}
If $\ep d_G(v,r) < d_G(u,r) \leq d_G(v,r)$, then the probability that $u$ and $v$ are in different bands is at most $\ep$.
\end{lemma}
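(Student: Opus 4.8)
The plan is to bound the separation event by a simple statement about a uniform random shift landing in an interval, after passing to a logarithmic scale on which the band boundaries become equally spaced.

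First I would record that the band boundaries are exactly the distance thresholds $(1/\ep)^{(i+x)(1/\ep)}$ for $i \in \{0,1,2,\dots\}$, and that $u$ and $v$ end up in different bands only if one of these thresholds lies (weakly) between $d_G(u,r)$ and $d_G(v,r)$. Next I apply the change of variable $g(d) := \ep\log_{1/\ep} d$, which is increasing and sends the threshold $(1/\ep)^{(i+x)(1/\ep)}$ to the value $i + x$. Writing $a := g(d_G(u,r))$ and $b := g(d_G(v,r))$ — so that $a \le b$ since $d_G(u,r) \le d_G(v,r)$ by hypothesis — the separation event is contained in the event that the shifted integer lattice $\{m + x : m \in \mathbb{Z}\}$ meets the interval $(a,b]$.

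Since this lattice has spacing $1$ and $x$ is uniform on $[0,1]$, its shift acts as a uniform additive shift modulo $1$, so the probability that it meets a fixed interval $(a,b]$ is $\min(b-a,\,1) \le b - a$. It then remains to bound the log-gap $b - a = \ep\log_{1/\ep}\big(d_G(v,r)/d_G(u,r)\big)$. Here I would invoke the hypothesis $\ep\, d_G(v,r) < d_G(u,r)$, which gives $d_G(v,r)/d_G(u,r) < 1/\ep$ and hence $\log_{1/\ep}\big(d_G(v,r)/d_G(u,r)\big) < 1$. Thus $b - a < \ep$, and the separation probability is at most $\ep$, as required.

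The main (and essentially only) subtlety is the probability computation: one must check that under $g$ the thresholds become a unit-spaced lattice, that the random parameter $x$ then acts exactly as a uniform shift modulo $1$, and that the separation event is \emph{contained} in the ``lattice hits $(a,b]$'' event. Using the containment rather than an equivalence is enough for the upper bound and cleanly sidesteps the degenerate innermost band $B_0$ (where several transformed values collapse into a single band). Once this is set up correctly, the distance hypothesis plugs in directly to cap the log-gap, and hence the probability, below $\ep$.
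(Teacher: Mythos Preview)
Your argument is correct and is essentially the paper's proof in different clothing: the paper also writes $d_G(u,r)=(1/\ep)^{(i+a)/\ep}$ and $d_G(v,r)=(1/\ep)^{(i+b)/\ep}$, bounds $b-a\le\ep$ from the ratio hypothesis, and then computes the probability that $x$ falls between $a$ and $b$ (mod $1$) via a two-case analysis. Your change of variable $g$ and ``uniform shift of a unit lattice'' phrasing packages the same computation more cleanly and avoids the explicit case split, but the underlying idea and the key estimate $b-a\le\ep$ are identical.
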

\begin{proof}
Let $i$ be the nonnegative integer such that $d_G(u,r) =
{\frac{1}{\ep}}^{(i+a)\frac{1}{\ep}}$ for some $a \in [0,1]$.  Let $b$
be the number such that $d_G(v,r) = {\frac{1}{\ep}}^{(i+b)\frac{1}{\ep}}$.

$$\frac{1}{\epsilon} \geq \frac{d_G(v,r)}{d_G(u,r)} =
\frac{{\frac{1}{\ep}}^{(i+b)\frac{1}{\ep}}}{{\frac{1}{\ep}}^{(i+a)\frac{1}{\ep}}}
= {\frac{1}{\ep}}^{(b-a)\frac{1}{\ep}}$$
Therefore
$$b-a \leq \ep$$

Consider two cases.  If $b\leq 1$, then the probability that $u$ and $v$ are in different bands is $Pr[a\leq x < b] \leq \ep$.  

If $b > 1$ then the probability that $u$ and $v$ are in different bands is $Pr[x \geq a \text{ or } x \leq b-1] \leq 1-a + b-1 = b-a \leq \ep$
\end{proof}

We now prove Equation~\ref{eq:expected-error3}.  Let $u$ and $v$ be
vertices in $G$. Without loss of generality, assume $d_G(u,r) \leq d_G(v,r)$.  First we address the case where $d_G(u,r) \leq \ep d_G(v,r)$.  Since $\phi(u)$ and $\phi(v)$ are both adjacent to $r'$ in $H$, $d_H(\phi(u),\phi(v)) \leq d_H(\phi(u),r') + d_H(\phi(v),r') = d_G(u,r) + d_G(v,r) \leq 2d_G(u,r) + d_G(u,v) \leq d_G(u,v) + 2\ep d_G(v,r)$.  Therefore $E[d_H(\phi(u),\phi(v))] \leq d_G(u,v) + 3\ep[d_G(u,r) + d_G(v,r)]$

Now, suppose $d_G(u,r) > \ep d_G(v,r)$.  If $u$ and $v$ are in the
same band $B_i$, then by Lemma~\ref{lem:spanner},
\begin{eqnarray*}
d_{H}(\phi(u),\phi(v))&\leq& d_{H_i}(\phi(u),\phi(v)) \leq d_G(u,v) +
                             \ep' \frac{1}{\epsilon}^{(i+x)\frac{1}{\epsilon}}   \\
&\leq& d_G(u,v) + \ep^{\frac{1}{\ep}+1}
4{\frac{1}{\ep}}^{(i+x)\frac{1}{\ep}}\\
& = & d_G(u,v) +
\ep4{\frac{1}{\ep}}^{(i+x-1)\frac{1}{\ep}} \leq d_G(u,v) +
       2\ep(d_G(u,r)+d_G(v,r))
\end{eqnarray*}
In the final inequality, when
$i=0$, we use the fact that all nonzero distances are at least one to
give a lower bound on $d_G(u,r)$ and $d_G(v,r)$.

If $u$ and $v$ are in different bands, then since  $\phi(u)$ and $\phi(v)$ are both adjacent to $r'$ in $H$, $d_H(\phi(u),\phi(v)) \leq d_H(\phi(u),r') + d_H(\phi(v),r') = d_G(u,r) + d_G(v,r)$.  By Lemma~\ref{lem:prob}, this case occurs with probability at most $\ep$.  

Therefore $E[d_H(\phi(u),\phi(v))] \leq (d_G(u,v) +
2\ep(d_G(u,r)+d_G(v,r))) + \ep[d_G(u,r) + d_G(v,r)] \leq d_G(u,v) +
3\ep[d_G(u,r) + d_G(v,r)]$, which proves Inequality~\ref{eq:expected-error3}.

The construction does not depend on planarity only via Lemma~\ref{lem:spanner}.
For the sake of future uses of the construction
with other graph classes, we state a lemma.

\begin{lemma}\label{lem:reduction}
Let $\mathcal F$ be a family of graphs closed under
vertex-induced subgraphs.  Suppose that there is a function $f$ and a
polynomial-time algorithm that, for any graph $G$ in $\mathcal F$,
computes a graph $H$ of treewidth at most $f(\epsilon)$ and an
embedding $\phi$ of $G$ into $H$ such that, for all vertices $u$ and $v$,
  $$d_G(u,v)\leq d_H(\phi(u),\phi(v)) \leq d_G(u,v) + \ep D$$ 
Then there is a function $g$ and a randomized polynomial-time algorithm that, for any graph $G$ in $\mathcal F$,
computes a graph $H$ with treewidth at most $g(\epsilon)$ and
an embedding $\phi$ of $G$ into $H$, such that, for every pair of
vertices $u,v$ of $G$, with probability 1 $d_G(u,v) \leq
d_H(\phi(u),\phi(v))$, and
$$E[d_H(\phi(u),\phi(v))] \leq d_G(u,v) + \ep\left[(d_G(u,r) +d_G(v,r)\right]$$
\end{lemma}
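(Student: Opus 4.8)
The plan is to observe that the proof of Theorem~\ref{thm:embed} just completed is, apart from a single invocation, entirely metric and combinatorial: planarity entered only when applying the black-box spanner of Lemma~\ref{lem:spanner} to each subgraph $G_i$. So I would re-run that argument essentially verbatim, substituting the hypothesized $\mathcal F$-embedding for Lemma~\ref{lem:spanner}. Concretely, I choose $x\in[0,1]$ uniformly at random, define the bands $B_i$ by the same distance thresholds, and set $G_i$ to be the subgraph induced by $B_i$ together with all $u$-to-$v$ and $v$-to-$r$ shortest paths for $u,v\in B_i$. I then apply the hypothesized embedding to each $G_i$ with error parameter $\ep' = \ep^{\ep^{-1}+1}$, obtaining host graphs $H_i$ and embeddings $\phi_i$, and glue the $H_i$ to a fresh depot $r'$ via edges $(\phi_i(v),r')$ of length $d_G(v,r)$, exactly as before.

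The one step that genuinely invokes the hypothesis of the lemma---and which I regard as the crux---is verifying that each $G_i$ belongs to $\mathcal F$, so that the black-box embedding applies. Here I would point out that the vertex set of $G_i$ is precisely $B_i$ together with every vertex lying on one of the relevant shortest paths, and that $G_i$ is the subgraph of $G$ induced on that vertex set. Thus $G_i$ is a vertex-induced subgraph of $G$, and since $\mathcal F$ is closed under vertex-induced subgraphs, $G_i\in\mathcal F$. This is exactly the role that closure under subgraphs played implicitly for planar graphs in the original proof, now replaced by the stated closure property.

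With applicability established, the remaining estimates transfer unchanged. The diameter bound $\mathrm{diam}(G_i)\le 4(\tfrac{1}{\ep})^{(i+x)\frac{1}{\ep}}$ is a purely metric fact about the included shortest paths and does not use planarity. For treewidth, each $H_i$ has treewidth at most $f(\ep')$; a disjoint union does not increase treewidth, and adding the single vertex $r'$ raises it by at most one, so I may take $g(\ep)=f(\ep^{\ep^{-1}+1})+1$. For the distance guarantee, the lower bound $d_G(u,v)\le d_H(\phi(u),\phi(v))$ holds with probability $1$ by construction, and the expected upper bound follows from Lemma~\ref{lem:prob} together with the same three-way case analysis: the near-depot case $d_G(u,r)\le\ep d_G(v,r)$, the same-band case via the $\mathcal F$-embedding, and the different-band case (routing through $r'$), which by Lemma~\ref{lem:prob} occurs with probability at most $\ep$. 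Since the only novelty relative to Theorem~\ref{thm:embed} is swapping the source of the per-subgraph embedding, there is no obstacle beyond the subgraph-membership check above; I would finish by rescaling $\ep$ by a constant factor, as in the passage from Inequality~\ref{eq:expected-error3} to Theorem~\ref{thm:embed}, to absorb the factor $3$ and land on the clean bound $d_G(u,v)+\ep[d_G(u,r)+d_G(v,r)]$ stated in the lemma.
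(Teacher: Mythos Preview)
Your proposal is correct and matches the paper's approach exactly: the paper does not give a separate proof of Lemma~\ref{lem:reduction}, remarking only that the construction for Theorem~\ref{thm:embed} depends on planarity solely through Lemma~\ref{lem:spanner}, so swapping in the hypothesized $\mathcal F$-embedding yields the lemma verbatim. Your explicit verification that each $G_i$ is a vertex-induced subgraph of $G$ (hence lies in $\mathcal F$) is precisely the point the closure hypothesis is there to secure, and your choice $g(\ep)=f(\ep^{\ep^{-1}+1})+1$ is the natural one.
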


\section{PTAS for Capacitated Vehicle Routing}\label{sec:ptas}

In this section, we show how to use the embedding of Section~\ref{sec:embedding} to give a PTAS for {\sc Capacitated Vehicle Routing}, proving Theorem~\ref{thm:vehicle_routing}.

\subsection{Randomized algorithm}\label{sec:relaxed}

We first prove a slight relaxation of
Theorem~\ref{thm:vehicle_routing} in which the algorithm is
randomized, and the solution value is near-optimal \emph{in expectation}.  We then show in Section~\ref{sec:derand} how to derandomize the result.

\begin{theorem}\label{thm:expectation_PTAS}
For any $\ep>0$ and capacity $Q$, there is a randomized algorithm for {\sc Capacitated Vehicle Routing} on planar graphs that in polynomial time returns a solution whose expected value is at most $1+\ep$ times optimal.
\end{theorem}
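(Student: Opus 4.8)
The plan is to use the embedding theorem (Theorem~\ref{thm:embed}) as the central tool, combined with the fact that {\sc Capacitated Vehicle Routing} is exactly solvable in polynomial time on bounded-treewidth graphs. The overall algorithm has three phases: (1) apply the randomized embedding of Theorem~\ref{thm:embed} to map the input planar graph $G$ into a bounded-treewidth host graph $H$; (2) solve the induced {\sc Capacitated Vehicle Routing} instance optimally in $H$ using the dynamic-programming algorithm for bounded treewidth; (3) lift the resulting tours back to $G$. The two facts that make this work are the distance-domination guarantee ($d_G(u,v)\leq d_H(\phi(u),\phi(v))$ with probability $1$), which lets us lift any solution from $H$ back to $G$ without increasing its cost, and the expected-error bound of Inequality~\ref{eq:expected-error}, which lets us bound the cost of the induced optimal solution in $H$ against $\cost(OPT)$ in expectation.

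First I would formalize lifting: a tour in $H$ is a closed walk through images $\phi(v_0),\phi(v_1),\dots,\phi(v_0)$; replacing each edge $(\phi(v_{i}),\phi(v_{i+1}))$ of the walk by a shortest $v_i$-to-$v_{i+1}$ path in $G$ yields a closed walk in $G$ visiting the same clients, and by distance domination its length is at most the length of the tour in $H$. Hence $\cost_G(\text{lifted solution})\leq\cost_H(\text{solution in }H)$ with probability $1$, and the capacity constraints are preserved since the client set served by each tour is unchanged. So the lifted solution is a feasible solution to the original instance whose cost is at most the cost of the solution found in $H$, which is $\cost_H(OPT_H)$ since we solve $H$ optimally.

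The heart of the argument is the upper bound $E[\cost_H(OPT_H)]\leq(1+O(\ep))\cost(OPT)$. To get this, I would take the fixed optimal solution $OPT$ for $G$ and map it forward into $H$: for each tour $\pi=r,s_1,s_2,\dots,s_k,r$ in $OPT$, consider the corresponding tour in $H$ that visits $\phi(r),\phi(s_1),\dots,\phi(s_k),\phi(r)$, routing between consecutive images along shortest paths in $H$. This gives a feasible solution in $H$ of cost at most $\sum d_H(\phi(\cdot),\phi(\cdot))$ summed over consecutive client pairs of all tours, so $\cost_H(OPT_H)\leq$ this quantity. Taking expectations and applying Inequality~\ref{eq:expected-error} edge-by-edge bounds each term $E[d_H(\phi(a),\phi(b))]\leq d_G(a,b)+\ep[d_G(a,r)+d_G(b,r)]$. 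Summing, the leading terms reconstruct $\cost(OPT)$, and the error terms sum to a quantity of the form $O(\ep)\sum_{s\in S}d_G(s,r)$ because each client is an endpoint of a bounded number of consecutive-pair edges per tour (at most two) and appears on a bounded number of tours. The crucial step is then to charge this error against $\cost(OPT)$ using the Haimovich--Rinnooy Kan lower bound, Inequality~\ref{eq:lb}, which gives $\sum_{s\in S}d_G(s,r)\leq\frac{Q}{2}\cost(OPT)$. This is the step I expect to be the main obstacle and the one requiring care: the naive bound introduces a factor of $Q$, so to keep the final guarantee at $1+\ep$ the embedding must be invoked with error parameter $\ep/Q$ (or $\Theta(\ep/Q)$) rather than $\ep$, which is affordable since $Q$ is a fixed constant and the treewidth bound $(1/\ep')^{c\ep'^{-1}}$ stays bounded for $\ep'=\Theta(\ep/Q)$.

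Putting the pieces together yields $E[\cost_G(\text{lifted solution})]\leq E[\cost_H(OPT_H)]\leq(1+\ep)\cost(OPT)$ after rescaling the embedding parameter, and the running time is polynomial because the embedding algorithm runs in polynomial time, the treewidth of $H$ is bounded by a constant depending only on $\ep$ and $Q$, and the dynamic program on bounded-treewidth graphs runs in polynomial time for fixed $Q$. I would close by noting that the solution produced is always feasible (feasibility holds with probability $1$), so only the cost guarantee is in expectation, which matches the statement of Theorem~\ref{thm:expectation_PTAS}.
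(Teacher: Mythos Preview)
Your proposal is correct and follows essentially the same approach as the paper: embed via Theorem~\ref{thm:embed} with parameter $\Theta(\ep/Q)$, solve optimally on the bounded-treewidth host using the dynamic program of Lemma~\ref{lem:dp}, lift back using distance domination, and bound the expected cost by pushing $OPT$ forward, applying Inequality~\ref{eq:expected-error} to each consecutive pair, and charging the error terms via the Haimovich--Rinnooy Kan lower bound~(\ref{eq:lb}). The paper carries out exactly this chain of inequalities (with $\hat\ep=\ep/3Q$) in Lemma~\ref{lem:performance} and handles the running time in Lemma~\ref{lem:runtime}.
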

Our result depends on the following lemma, which is proved in~\cite{BeckerKS17}, the full version of \cite{bks_hwy_dim}.  

\begin{lemma}[Lemma 20 in \cite{bks_hwy_dim}, Lemma 15 in \cite{BeckerKS17}]\label{lem:dp}
Given an instance of {\sc Capacitated Vehicle Routing} with capacity $Q$ on a graph $G$ with treewidth $w$, there is a dynamic-programming algorithm that finds an optimal solution in $n^{O(wQ)}$ time.
\end{lemma}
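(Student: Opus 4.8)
The plan is to solve the problem by dynamic programming over a tree decomposition, exploiting the fact that a bag of size at most $w+1$ is a separator: once we fix how the tours interact with such a separator, the two sides of the tree become independent subproblems. First I would compute, in time $f(w)\cdot n$, a tree decomposition of width $O(w)$ (e.g.\ via Bodlaender's algorithm or a constant-factor approximation) and convert it into a \emph{nice} tree decomposition with $O(n)$ bags, each of size at most $w+1$, having the usual leaf, introduce-vertex, forget-vertex, and join nodes; I would root the decomposition at a bag containing the depot $r$. The key structural observation is that if $X$ is the bag at a node and $T_X$ is the subtree rooted there, then $X$ separates the interior vertices of $T_X$ from the rest of $G$, so when any tour is intersected with the edges inside $T_X$ it breaks into a collection of walks (\emph{fragments}) whose endpoints all lie in $X$. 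Thus a partial solution restricted to $T_X$ is just a family of such fragments, and to paste partial solutions together it suffices to record how these fragments meet the separator.

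The DP signature at a bag $X$ therefore records, for each boundary vertex, the fragment-endpoints that currently sit there together with the \emph{load} of each incident fragment, i.e.\ the number of clients it has already covered, a value in $\{0,1,\dots,Q\}$ so that the capacity bound can be enforced. Because the internal cost and routing of a fragment are already accounted for when the fragment is built, it is enough to remember where a fragment meets the separator and how much load it carries; its identity is otherwise irrelevant to how it will later be spliced. A preliminary exchange argument keeps the number of fragments manageable: in an optimal solution we may assume every incursion of a tour into the interior of $T_X$ covers at least one new interior client, since otherwise the incursion can be short-circuited at the separator without increasing cost. Consequently the signature can be stored compactly as, for each of the $O(w)$ boundary vertices and each load value in $\{0,\dots,Q\}$, a multiplicity in $\{0,1,\dots,n\}$ counting the fragment-endpoints of that load incident there. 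Over the $O(wQ)$ such (vertex, load) slots the number of distinct signatures is $n^{O(wQ)}$, and for each signature the DP stores the minimum cost of a partial solution realizing it.

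I would then fill the table bottom-up with the standard transitions. A leaf bag carries the empty solution. An introduce node adds a vertex and its incident edges, allowing new fragment-endpoints to appear; a forget node removes a vertex from the boundary, at which point I require that any client at that vertex has been covered and that its incident fragments have been routed consistently; a join node combines two children with the same bag by merging their fragment families, adding the two loads whenever two fragments are spliced into one and discarding any combination that would exceed capacity $Q$, with edge lengths accumulated as fragments are extended. At the root, whose bag contains $r$, I require that every fragment close up into a tour through $r$ of total load at most $Q$ and that all clients be covered; the answer is the minimum cost over valid root signatures. Correctness follows because each full tour is reconstructed exactly once as the union of its fragments across the decomposition, and the recorded loads guarantee the per-tour capacity constraint; the running time is the number of signatures times the per-transition work, namely $n^{O(wQ)}$.

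The step I expect to be the main obstacle is the bookkeeping around the \emph{global} capacity constraint. Unlike a connectivity requirement that is local to a separator, the bound of at most $Q$ clients per tour couples fragments that may be widely separated in the tree and only recombine into a single closed walk near the root, so the load information must be threaded correctly through join and forget operations --- in particular, splices at a join must add loads, and the short-circuiting argument must be in place to keep the number of tracked fragment-classes down to $O(wQ)$ rather than $\Theta(n)$. Getting this encoding right, and verifying that every feasible solution corresponds to exactly one consistent sequence of signatures so that the DP neither misses nor double-counts solutions, is where the real work lies; the remaining transition rules and the polynomial state count are then routine.
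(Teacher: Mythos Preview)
The paper does not supply its own proof of this lemma; it imports the statement from~\cite{BeckerKS17} (the full version of~\cite{bks_hwy_dim}) and uses it as a black box. Your high-level plan---dynamic programming over a nice tree decomposition rooted at a bag containing $r$, with partial solutions described by the tour fragments crossing the current bag together with their client loads, and with the short-circuiting argument to bound the number of fragments---is the right one and matches the approach of the cited reference.

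There is, however, a genuine gap in the specific signature you propose. Recording only, for each bag vertex $v$ and each load $\ell\in\{0,\dots,Q\}$, the \emph{number} of fragment-endpoints at $v$ carrying load $\ell$ discards the pairing of endpoints into fragments. For instance, the configuration ``one fragment with endpoints $u,v$ and load $1$ together with one fragment with endpoints $w,x$ and load $1$'' has the same endpoint-count vector as ``one fragment with endpoints $u,w$ and load $1$ together with one fragment with endpoints $v,x$ and load $1$'', yet these are not interchangeable: splicing an additional fragment at $v$ produces different combined loads in the two cases, and the feasible completions outside the subtree differ. Because load is a property of an entire fragment and is summed when two fragments are spliced, correct bookkeeping at join and forget nodes requires knowing which two boundary endpoints belong to the same fragment. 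The straightforward repair is to record the multiset of triples (unordered endpoint pair in $X$, load), giving $O(w^2)(Q{+}1)$ slots with multiplicities in $\{0,\dots,n\}$ and hence $n^{O(w^2Q)}$ states---still polynomial for fixed $w$ and $Q$ and therefore adequate for the PTAS in this paper, but not literally the $n^{O(wQ)}$ exponent stated. You rightly identify the encoding as the crux; this is exactly where your current proposal needs tightening.
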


Given the dynamic program of Lemma~\ref{lem:dp} and the embedding of Theorem~\ref{thm:embed} as black boxes, the algorithm is as follows.  First, the graph $G$ is embedded as in Theorem~\ref{thm:embed} using $\hat{\ep}=\ep/3Q$ into a host graph $H$ with treewidth $(\frac{1}{\hat{\ep}})^{c\hat{\ep}^{-1}}$ for some constant $c$, and $d_G(u,v)\leq E[d_H(\phi(u),\phi(v))] \leq d_G(u,v) + 3\hat{\ep}(d_G(u,r) + d_G(v,r))$ for all vertices $u$ and $v$.  The dynamic program of Lemma~\ref{lem:dp} is then applied to $H$. The resulting solution $SOL_H$ in $H$ is then mapped back to a solution $SOL_G$ in $G$ which is returned by the algorithm.

Note that the tours in any vehicle-routing solution can be defined by specifying the order in which clients are visited.  In particular, we use $(u,v)\in SOL$ to denote that $u$ and $v$ are consecutive clients visited by the solution, noting that $u$ or $v$ may actually be the depot.  In this way, a solution in $H$ is easily mapped back to a corresponding solution in $G$, as $(u,v)\in SOL_G$ if and only if $(\phi(u),\phi(v))\in SOL_H$.

\medskip
We now prove Theorem~\ref{thm:expectation_PTAS} by analyzing this algorithm.

\begin{lemma}\label{lem:performance}
For any $\ep>0$, the algorithm described above finds a solution whose expected value is at most $1+\ep$ times optimal.
\end{lemma}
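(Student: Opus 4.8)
The plan is to bound the expected cost of $SOL_G$ by relating it, tour by tour and edge by edge, to both the expected cost of $SOL_H$ and to $\cost(OPT)$ in $G$. The overall structure is a two-sided comparison: first, $SOL_H$ is optimal in $H$, so its cost is at most the cost in $H$ of the image under $\phi$ of an optimal solution $OPT_G$ in $G$; second, the cost of $SOL_G$ in $G$ is at most the cost of $SOL_H$ in $H$, because distances in $H$ dominate distances in $G$ with probability $1$ (the first guarantee of Theorem~\ref{thm:embed}). Chaining these gives
\begin{equation*}
\cost_G(SOL_G) \;\leq\; \cost_H(SOL_H) \;\leq\; \cost_H(\phi(OPT_G)),
\end{equation*}
so it suffices to bound $E[\cost_H(\phi(OPT_G))]$ in terms of $\cost_G(OPT_G)=\cost(OPT)$.

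First I would expand $\cost_H(\phi(OPT_G))$ as a sum over consecutive client pairs. Since a tour is determined by the order in which it visits clients, write $\cost_H(\phi(OPT_G)) = \sum_{(u,v)\in OPT_G} d_H(\phi(u),\phi(v))$, where the sum ranges over consecutive pairs (including pairs involving the depot). Taking expectations and applying Inequality~\ref{eq:expected-error3} of Theorem~\ref{thm:embed} with the parameter $\hat\ep=\ep/3Q$ to each term gives
\begin{equation*}
E[\cost_H(\phi(OPT_G))] \;\leq\; \sum_{(u,v)\in OPT_G}\Big(d_G(u,v) + 3\hat\ep\,[d_G(u,r)+d_G(v,r)]\Big)
\;=\; \cost(OPT) + 3\hat\ep\!\!\sum_{(u,v)\in OPT_G}\![d_G(u,r)+d_G(v,r)].
\end{equation*}
The first summand recovers $\cost(OPT)$ exactly. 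The task is then to control the error term $3\hat\ep\sum_{(u,v)}[d_G(u,r)+d_G(v,r)]$.

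The key step, and the one I expect to be the main obstacle, is bounding this error sum against $\cost(OPT)$ rather than against something uncontrolled. The right tool is the Haimovich–Rinnoy Kan lower bound, Inequality~\ref{eq:lb}, which states $\frac{2}{Q}\sum_{v\in S} d_G(v,r) \leq \cost(OPT)$. The subtlety is that each client $v$ appears in the error sum only a bounded number of times: in the term for the pair entering $v$ and the term for the pair leaving $v$, so each client contributes its $d_G(v,r)$ at most twice. Hence $\sum_{(u,v)\in OPT_G}[d_G(u,r)+d_G(v,r)] \leq 2\sum_{v\in S} d_G(v,r) + (\text{depot terms})$; the depot terms vanish since $d_G(r,r)=0$. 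Combining with Inequality~\ref{eq:lb} yields
\begin{equation*}
3\hat\ep\sum_{(u,v)\in OPT_G}[d_G(u,r)+d_G(v,r)] \;\leq\; 3\hat\ep\cdot 2\sum_{v\in S} d_G(v,r)
\;\leq\; 3\hat\ep\, Q\,\cost(OPT) \;=\; \ep\,\cost(OPT),
\end{equation*}
using $\hat\ep=\ep/3Q$. Putting everything together, $E[\cost_G(SOL_G)] \leq \cost(OPT) + \ep\,\cost(OPT) = (1+\ep)\cost(OPT)$, as claimed. The one place to be careful is exactly the counting argument that each client is charged a constant number of times; getting the factor right is what forces the choice $\hat\ep=\ep/3Q$.
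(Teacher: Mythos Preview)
Your proposal is correct and matches the paper's proof essentially step for step: the same chaining $\cost_G(SOL_G)\le\cost_H(SOL_H)\le\cost_H(\phi(OPT))$, the same expansion over consecutive client pairs with linearity of expectation, the same observation that each client is counted twice so the error term becomes $6\hat\ep\sum_{v\in S}d_G(v,r)$, and the same appeal to the Haimovich--Rinnoy Kan bound~(\ref{eq:lb}) to convert this into $\ep\,\cost(OPT)$.
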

\begin{proof}

Let $OPT$ be the optimal solution in $G$ and let $OPT_H$ be the corresponding induced solution in $H$.  Since the dynamic program finds an optimal solution in $H$, we have $\cost_H(SOL_H)\leq \cost_H(OPT_H)$.  Additionally, since distances in $H$ are no shorter than distances in $G$, $\cost_G(SOL_G) \leq \cost_H(SOL_H)$.  Putting these pieces together, we have

\begin{eqnarray*}
  E[\cost_G(SOL_G)] &\leq& E[\cost_H(SOL_H)]\\
                    &\leq& E[\cost_H(OPT_H)]\\
                                      & =& E[\sum_{(u,v)\in OPT}d_H(\phi(u),\phi(v))]\\
                                      &=& \sum_{(u,v)\in OPT}E[d_H(\phi(u),\phi(v))] \\
  &\leq &  \sum_{(u,v)\in OPT}d_G(u,v) + 3\hat{\ep}(d_G(u,r) +  d_G(v,r))\\
                                      &=& \sum_{(u,v)\in OPT}d_G(u,v) + 6\hat{\ep}\sum_{v\in S}d_G(v,r)\\
                                      &\leq &\cost_G(OPT) + 6\hat{\ep}\frac{Q}{2}\cost_G(OPT)\\
                                      &=& (1+\ep)\cost_G(OPT)
\end{eqnarray*}
where the final inequality comes from Inequality~\ref{eq:lb} (see Section~\ref{sec:related_work}).
\end{proof}

The following lemma completes the proof of Theorem~\ref{thm:expectation_PTAS}.

\begin{lemma}\label{lem:runtime}
For any $Q,\ep>0$, the algorithm described above runs in polynomial time.
\end{lemma}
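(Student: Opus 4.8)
The plan is to bound the running time by tracing through the two black-box ingredients the algorithm composes, namely the embedding of Theorem~\ref{thm:embed} and the dynamic program of Lemma~\ref{lem:dp}, and to verify that each contributes only a polynomial factor (for fixed $\ep$ and $Q$). First I would observe that the embedding step is polynomial-time by Theorem~\ref{thm:embed} itself; moreover the host graph $H$ it produces has at most $O(n)$ vertices, since $H$ is assembled from the subgraphs $H_i$ (each of size polynomial in the size of $G_i$, hence in $n$) together with a single new vertex $r'$, and the number of nonempty bands is at most $n$. The mapping of $SOL_H$ back to $SOL_G$ is likewise a linear-time postprocessing step. So the only step that needs genuine attention is the dynamic program.

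Next I would invoke Lemma~\ref{lem:dp}: applied to $H$, which has treewidth $w = (\frac{1}{\hat\ep})^{c\hat\ep^{-1}}$, the dynamic program runs in time $n^{O(wQ)}$ (where $n$ here denotes the size of $H$, which is $O(n)$ in the original instance size). The crucial point is that the algorithm sets $\hat\ep = \ep/(3Q)$, so the treewidth is
\begin{equation*}
w = \left(\tfrac{3Q}{\ep}\right)^{c\,(3Q/\ep)}.
\end{equation*}
For fixed $\ep$ and $Q$ this is a \emph{constant} independent of $n$. Consequently the exponent $O(wQ)$ in the running time $n^{O(wQ)}$ is also a constant depending only on $\ep$ and $Q$, and the whole dynamic-programming step runs in time $n^{O_{\ep,Q}(1)}$, i.e. polynomial in $n$. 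Combining the polynomial embedding time, the polynomial-size host graph, and this polynomial dynamic-programming time, the overall algorithm runs in polynomial time, which is exactly the claim.

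The step I expect to be the main obstacle is making precise the claim that the treewidth $w$ is bounded by a constant independent of $n$: one must be careful that $\hat\ep$ depends on $\ep$ and $Q$ but \emph{not} on $n$, so that although $w$ is a large tower-type constant, it is still a constant for the purpose of a PTAS, and therefore the exponent $O(wQ)$ in Lemma~\ref{lem:dp} does not hide any dependence on $n$. I would emphasize that this is consistent with the definition of a PTAS given in Section~\ref{sec:prelims}, where the constant $c$ in the $O(n^c)$ runtime is permitted to depend on $\ep$ (and here additionally on the fixed capacity $Q$). A minor secondary point worth stating explicitly is that the size of $H$ is $O(n)$ rather than merely polynomial, so that $n^{O(wQ)}$ measured in the size of $H$ translates to the same bound measured in the size of the original instance; this follows from the band construction producing subgraphs whose total size is linear in $n$.
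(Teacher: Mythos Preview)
Your proof is correct and follows essentially the same route as the paper: cite Theorem~\ref{thm:embed} for polynomial-time embedding, plug the treewidth bound $w=(1/\hat\ep)^{c\hat\ep^{-1}}$ with $\hat\ep=\ep/(3Q)$ into the $n^{O(wQ)}$ running time from Lemma~\ref{lem:dp}, and observe that $w$ depends only on $\ep$ and $Q$. One small overreach: your claim that $|V_H|=O(n)$ is not justified by the cited lemmas---the $G_i$ do \emph{not} have total size linear in $n$ (the paper explicitly notes they overlap, since each $G_i$ includes shortest paths through vertices outside $B_i$), and Lemma~\ref{lem:spanner} gives no linear bound on $|V(H_i)|$. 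The paper more cautiously says only that $|V_H|$ is polynomial in $|V_G|$, which is all you need and all that follows from the black boxes; your argument goes through unchanged with this weaker bound.
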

\begin{proof}
By Lemma~\ref{lem:spanner}, computing $H$ and the embedding of $G$ into $H$ takes polynomial time.
By Lemma~\ref{lem:dp}, the dynamic program runs in $|V_H|^{O(wQ)}$ time, where $w$ is the treewidth of $H$.  By Theorem~\ref{thm:embed}, $w = (\frac{1}{\hat{\ep}})^{c\hat{\ep}^{-1}} = (\frac{Q}{\ep})^{c'Q\ep^{-1}}$, where $c$ and $c'$ are constants independent of $|V_H|$.  

The algorithm therefore runs in $|V_H|^{(Q\ep^{-1})^{O(Q\ep^{-1})}}$
time.  Finally, since $|V_H|$ is polynomial in the size of $G$,  for
fixed $Q$ and $\ep$, the running time is polynomial.
\end{proof}

\subsection{Derandomization}\label{sec:derand}

The algorithm can be derandomized using a standard technique.  The
embedding of Theorem~\ref{thm:embed} partitions the vertices of the
input graph into rings depending on a value $x$ chosen
uniformly at random from $[0,1]$.  However, the partition depends on
the distances of vertices from the root $r$.  It follows that the
number of partitions that can arise from different choices of $x$ is
at most the number of vertices.  The deterministic algorithm tries
each of these partitions, finding the corresponding solution, and
returns the least costly of these solutions.

In particular, consider the optimum solution $OPT$.  As shown in Section~\ref{sec:relaxed},  

$$E[\sum_{(u,v) \in OPT}d_H(\phi(u),\phi(v))] $$ $$= \sum_{(u,v) \in OPT}E[d_H(\phi(u),\phi(v))] $$ $$\leq (1+\ep)\cost_G(OPT)$$.

Therefore, for some choice of $x$, the induced cost of $OPT$ in $H$ is nearly
optimal, and the dynamic program will find a solution that costs at
most as much.
This completes the proof of Theorem~\ref{thm:vehicle_routing}.

\section{Conclusion}\label{sec:conclusion}

In this paper, we present the first PTAS for {\sc Capacitated Vehicle
  Routing} in planar graphs.  Although the approximation scheme takes polynomial
time, it is not an \emph{efficient} PTAS (one whose running time is
bounded by a polynomial whose degree is independent of the value of
$\epsilon$).  It is an open question as to whether an efficient PTAS
exists.  It is also open whether a PTAS exists when the
capacity $Q$ is unbounded.

\bibliography{main}

\end{document}